\newtheorem{thm}{Theorem}
\newtheorem{cor}[thm]{Corollary}
\newtheorem{prop}[thm]{Proposition}
\newtheorem{lem}[thm]{Lemma}
\theoremstyle{definition}
\newtheorem{ex}[thm]{Example}
\newcommand{\rs}{\mathrm{rs}}
\newcommand{\F}{\mathbb{F}}
\newcommand{\G}{\mathcal{G}_q(k,n)}
\newcommand{\U}{\mathcal{U}}
\newcommand{\V}{\mathcal{V}}
\newcommand{\R}{\mathcal{R}}
\newcommand{\E}{\mathcal{E}}
\newcommand{\C}{\mathcal{C}}
\newcommand{\PG}{PG(n, q)}
\newcommand{\GL}{\mathrm{GL}}
\begin{document}

\title{Spread Decoding in Extension Fields}
\author{Felice Manganiello\fnref{ref1}}
\address{Department of Electrical and Computer Engineering, University of Toronto}
\author{Anna-Lena Trautmann\fnref{ref2}}
\address{Institute of Mathematics, University of Zurich}
\fntext[ref1]{The author is partially supported by Swiss National Science Foundation Grant no. 126948 and 135934.}
\fntext[ref2]{The author is partially supported by Swiss National Science Foundation Grant no. 126948 and 138080.}
\ead[url]{www.math.uzh.ch/aa}
%\subjclass[2010]{Algebraic coding theory 11T71}
%\keywords{network coding; subspace codes; Grassmannian; projective space; spreads}

\begin{abstract}
  A spread code is a set of vector spaces of a fixed dimension over a
  finite field $\F_q$ with certain properties used for random network
  coding. It can be constructed in different ways which lead to
  different decoding algorithms. In this work we consider one such 
  representation of spread codes and present a minimum distance
  decoding algorithm which is efficient when the codewords,
  the received space and the error space have small dimension.
\end{abstract}

\maketitle

\section{Introduction}

In network coding, one is interested in efficient communication
between different sources and receivers in a network which is
representable through a directed acyclic graph. In multicast, one is
looking at the communication between a sender to several receivers,
where each receiver should receive the message sent by the
sender. In \cite{ko03a,li03} it is proven that one achieves the
communication rate simply by allowing nodes of the network to forward
random linear combinations of its information vectors. If the
underlying topology of the network is unknown we speak about
\textit{random linear network coding}. Since linear spaces are
invariant under linear combinations, they are what is needed as
codewords \cite{ko08}. It is helpful for decoding to constrain
oneself to subspaces of a fixed dimension, in which case we talk about
\emph{constant dimension codes}.

One class of constant dimension codes is the one of \emph{spread
  codes}. These codes have maximal minimum distance and are optimal in
the sense that they achieve the Singleton-like bound on the
cardinality of network codes. They can be constructed with the help of
companion matrices of irreducible polynomials, as explained in
\cite{ma08p}.

In this work we translate the construction of \cite{ma08p} to an extension field
setting and evolve a minimum distance decoding algorithm for spread
codes in this setting. The complexity of this new algorithm depends on
different parameters than the algorithms of \cite{go11,ko08,si08a},
which are also applicable to spread codes. Therefore, depending on the
network setting applied on, the new algorithm has an improved
performance.

The paper is structured as follows: In Section \ref{2} we give some
preliminaries on random network coding and constant dimension
codes. The main results of this work are found in Section \ref{3},
where we first show how to translate the spread code construction of \cite{ma08p} into
a different setting and then explain how decoding can be done in this
setting. We study the complexity of the decoding algorithm and give
comparison to other known decoding algorithms in Section
\ref{4}. Moreover, we study the probability that the algorithm
terminates after fewer steps than the worst case scenario. We conclude
this work in Section \ref{5}.

%__________________________________________________________________________________________________________________________________________________________________________________________________________________________________________________________________________

\section{Preliminaries}\label{2}

Let $\mathbb{F}_q$ be the finite field with $q$ elements, where
$q$ is a prime power. We  denote the set of all subspaces of
$\F_{q}^{n}$ by $\PG$ and the set of all $k$-dimensional subspaces of
$\F_q^n$, called the Grassmannian, by $\G$. The general linear group
$GL_n$ is the set of all invertible $n\times
n$-matrices with entries in $\F_{q}$. Moreover, the set of all
$k\times n$-matrices over $\F_q$ is denoted by $Mat_{k\times n}$.

Let $U\in Mat_{k\times n}$ be a matrix of rank $k$ and
\[
\mathcal{U}=\rs (U):= \text{row space}(U)\in \G.
\]
One can notice that the row space is invariant under $GL_k$-multiplication from
the left, i.e. for any $T\in GL_k$
\[
\mathcal{U}=\rs(U)= \rs(T U).
\]
Thus, there are several matrices that represent a given subspace. A
unique representative of these matrices is the one in reduced row
echelon form.
Any $k\times n$-matrix can be transformed into reduced row echelon
form by a $T\in GL_k$.

A \textit{subspace code} is simply a subset of $\PG$ and a \textit{constant dimension code} is a subset of
the Grassmannian $\G$. 

The \emph{subspace distance}, given by
\begin{align*}
  d_S(\mathcal{U},\mathcal{V}) =& \dim(\U +\V) - \dim(\mathcal{U}\cap
  \mathcal{V})
\end{align*}  
for $\U,\V$ two subspaces of $\F_q^n$, is a metric function on $\PG$. It
induces a metric on $\G$ by
\begin{align*}
  d_S(\mathcal{U},\mathcal{V}) =& 2k - 2\dim(\mathcal{U}\cap
  \mathcal{V})
  \end{align*}  
for any $\mathcal{U},\mathcal{V} \in \G$. The minimum distance of a subspace code $\C \subseteq \PG$ is defined as
\[d(\C):=\min\{d_{S}(\U,\V) \mid \U,\V \in \C, \ \U\neq\V\}.\] 
The subspace distance is a suitable distance for coding over
the operator channel \cite{ko08}, where errors and erasures can be corrected. An error corresponds to an inserted erroneous vector, i.e. an increase in dimension, whereas an erasure is a decrease in dimension of the code word.  
The error-and-erasure correction capability of a code $\C \subseteq \PG$ with minimum distance $d(\C)$ is
\[t:=\left\lfloor\frac{d(\C)-1}{2}\right\rfloor.\] 
Different
constructions for constant dimension codes can be found e.g. in
\cite{et08u, ko08p, ko08, si08a, tr10p}.

In the case that $k$ divides $n$ one can construct codes with minimum
distance $2k$ and cardinality $\frac{q^{n}-1}{q^{k}-1}$, called
\textit{spread codes}. One construction for spread codes
is the following \cite{ma08p}. Let $p(x)=\sum_{i=0}^{k} p_{i}x^{i} \in \F_q[x]$ be
an irreducible monic polynomial of degree $k$. Its companion matrix
$P\in Mat_{k\times k}$ is
  \[P=\left(\begin{array}{cccccc}
  0 & 1 &0 & \dots & 0\\
  0 & 0 &1 &          & 0\\
  \vdots &&&\ddots\\
  0 & 0 &0 &          & 1\\
  -p_{0} & -p_{1}& -p_{2} &\dots & -p_{k-1}
  \end{array}\right) .\] 
Then the $\F_{q}$-algebra of such a companion matrix $\F_{q}[P]$ is a finite field and  
the set
\[
\left\{\rs \left[\begin {array}{cccc}
        B_0 & B_1 & \cdots & B_{\frac{n}{k}-1}
       \end {array}\right] \in \G \mid B_i\in \F_q[P]\right\}
\]
is a spread code. Since we want to work with the reduced row echelon
form as a unique matrix representation of the vector spaces we assume
that the first non-zero block from the left is the identity matrix.
These codes are optimal because they achieve the Singleton-like bound \cite{ko08}
on the cardinality of constant dimension codes for a given minimum distance.

%__________________________________________________________________________________________________________________________________________________________________________________________________________________________________________________________________________

\section{Spread Codes in Extension Field Representation}\label{3}

\subsection{Translation of the Construction}

We now translate the construction of spread codes of \cite{ma08p} from
the companion matrix to an extension field setting. Spreads of this
type are also known as \emph{$\F_{q}$-linear representations of
  $PG(l,{q^{k}})$} \cite{ba11a} or \emph{Desarguesian $(k-1)$-spreads}
\cite[p. 12]{la01t}. Since they exist for any degree over $\F_{q}$, we
choose primitive polynomials and their companion matrices for the
spread code constructions.  

For the remain of this paper assume that $k|n$ and let
$l=n/k$. Moreover, let $\alpha\in \F_{q^k}$ be a primitive element of $\F_{q^k}$
and $\beta\in \F_{q^n}$ a primitive element of $\F_{q^n}$ as an extension
field of $\F_{q^k}$. 
The polynomial $p(x)\in\F_q [x]$ denotes the minimal polynomial of
$\alpha$ and $P \in GL_{k}$ denotes its companion
matrix. It holds that $\mathrm{ord}(P)=q^{k}-1$ \cite[Lemma 6.26]{li94}.

%Note, that both fields are subfields of $\F_{q^n}$.  
Denote by $\phi^{(k)}: \F_q^k\rightarrow
\F_{q^k}$ and $ \phi^{(l)}:\F_{q^k}^l \rightarrow \F_{q^n}$ the standard
vector space isomorphisms:
\begin{align*}
 \phi^{(k)}(u_1,\dots,u_{k}) &= \sum_{i=0}^{k-1} u_{i+1} \alpha^{i} ,\\
 \phi^{(l)}(v_1,\dots,v_{l}) &= \sum_{i=0}^{l-1} v_{i+1} \beta^{i} ,
\end{align*}
for $(u_1,\dots,u_{k}) \in \F_{q}^{k}$ and $(v_1,\dots,v_{l}) \in F_{q^{k}}^{l}$.

\begin{prop} 
  Let $e_1,\dots, e_n$ be the standard basis of $\F_q^n$. Then
\[\bigcup_{i=0}^{l-1} \{\beta^i,\alpha \beta^i ,\dots, \alpha^{k-1} \beta^i\}\]
is a basis of $\mathbb{F}_{q^n}$ over $\F_q$ and
\begin{align*}
\phi: \mathbb{F}_q^n &\longrightarrow \mathbb{F}_{q^n}\\ 
  e_i &\longmapsto \alpha^{(i-1 \mod k)} \beta^{\lfloor \frac{i-1}{k}\rfloor}
\end{align*}
is a vector space isomorphism. 
\end{prop}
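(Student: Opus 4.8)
The plan is to establish the claimed basis property first, and then deduce that $\phi$ is an isomorphism as an essentially formal consequence. First I would argue that the set
\[
\bigcup_{i=0}^{l-1}\{\beta^i,\alpha\beta^i,\dots,\alpha^{k-1}\beta^i\}
\]
has exactly $kl=n$ elements (which requires that all the listed products are distinct — this follows once we know they are linearly independent over $\F_q$, so it need not be checked separately). Since $\dim_{\F_q}\F_{q^n}=n$, it suffices to prove that these $n$ elements are linearly independent over $\F_q$, or equivalently that they span $\F_{q^n}$ over $\F_q$.

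For the linear independence, I would use the tower $\F_q\subseteq\F_{q^k}\subseteq\F_{q^n}$. The key structural facts are: $\{1,\alpha,\dots,\alpha^{k-1}\}$ is an $\F_q$-basis of $\F_{q^k}$ because $\alpha$ is primitive in $\F_{q^k}$ and hence $\deg p=[\F_{q^k}:\F_q]=k$; and $\{1,\beta,\dots,\beta^{l-1}\}$ is an $\F_{q^k}$-basis of $\F_{q^n}$ because $\beta$ is a primitive element of $\F_{q^n}$ viewed as an extension of $\F_{q^k}$, so $[\F_{q^n}:\F_{q^k}]=l$ and $\beta$ generates this degree-$l$ extension. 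Now suppose $\sum_{i=0}^{l-1}\sum_{j=0}^{k-1}c_{ij}\,\alpha^j\beta^i=0$ with $c_{ij}\in\F_q$. Grouping by powers of $\beta$, set $d_i:=\sum_{j=0}^{k-1}c_{ij}\alpha^j\in\F_{q^k}$; then $\sum_{i=0}^{l-1}d_i\beta^i=0$, and since $\{1,\beta,\dots,\beta^{l-1}\}$ is $\F_{q^k}$-linearly independent, each $d_i=0$. Then $\sum_{j=0}^{k-1}c_{ij}\alpha^j=0$ with $c_{ij}\in\F_q$ forces all $c_{ij}=0$ by independence of $\{1,\alpha,\dots,\alpha^{k-1}\}$ over $\F_q$. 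Equivalently, one can phrase this via the standard "basis of a tower" argument: if $B_1$ is an $F$-basis of $K$ and $B_2$ is a $K$-basis of $L$, then $\{b_1b_2: b_1\in B_1,\ b_2\in B_2\}$ is an $F$-basis of $L$; here $F=\F_q$, $K=\F_{q^k}$, $L=\F_{q^n}$, $B_1=\{\alpha^j\}$, $B_2=\{\beta^i\}$. This gives the basis claim.

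Finally, for the map $\phi$: by construction $\phi$ sends the standard basis vector $e_i$ of $\F_q^n$ to $\alpha^{(i-1\bmod k)}\beta^{\lfloor(i-1)/k\rfloor}$. As $i$ ranges over $1,\dots,n$, the pair $(\,(i-1)\bmod k,\ \lfloor(i-1)/k\rfloor\,)$ ranges bijectively over $\{0,\dots,k-1\}\times\{0,\dots,l-1\}$, so $\phi$ maps the standard basis of $\F_q^n$ bijectively onto the basis of $\F_{q^n}$ just exhibited. A linear map between finite-dimensional $\F_q$-vector spaces that carries a basis to a basis is an isomorphism, so $\phi$ is an $\F_q$-vector space isomorphism, and extending $\phi$ linearly is exactly $\phi^{(l)}$ composed blockwise with $\phi^{(k)}$ on each length-$k$ block of coordinates.

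The only mild obstacle is bookkeeping: one must check that the index map $i\mapsto(\,(i-1)\bmod k,\lfloor(i-1)/k\rfloor\,)$ is the right bijection and that the ordering in the displayed union matches, but this is a routine division-with-remainder argument rather than a genuine difficulty. The mathematical content is entirely the tower-of-extensions basis fact, which is standard.
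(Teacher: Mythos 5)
Your proof is correct, but it takes a different route from the paper's. The paper's proof is a one-line factorization argument: it defines the blockwise map $\tilde{\phi}^{(k)}:\F_q^n\to\F_{q^k}^l$, $(v_1,\dots,v_n)\mapsto(\phi^{(k)}(v_1,\dots,v_k),\dots,\phi^{(k)}(v_{n-k+1},\dots,v_n))$, observes that both $\tilde{\phi}^{(k)}$ and $\phi^{(l)}$ are vector space isomorphisms, and concludes that $\phi=\phi^{(l)}\circ\tilde{\phi}^{(k)}$ is an isomorphism; the basis claim then comes for free as the image of the standard basis under this composite. You instead make the basis claim primary: you prove directly, via the standard tower argument for $\F_q\subseteq\F_{q^k}\subseteq\F_{q^n}$, that the products $\alpha^j\beta^i$ form an $\F_q$-basis of $\F_{q^n}$, and then deduce that $\phi$ is an isomorphism because it carries a basis to a basis. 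The two arguments are equivalent in content --- your explicit linear-independence computation (group by powers of $\beta$, use $\F_{q^k}$-independence of $\{1,\beta,\dots,\beta^{l-1}\}$, then $\F_q$-independence of $\{1,\alpha,\dots,\alpha^{k-1}\}$) is precisely what makes the paper's two coordinate maps injective --- but yours is more self-contained and transparent, at the cost of spelling out the tower lemma, while the paper's is shorter and makes the block structure of $\phi$ (which is used repeatedly later, e.g.\ in Lemma \ref{lem2} and the decoding algorithm) explicit from the outset. Your closing remark that the linear extension of $\phi$ is exactly $\phi^{(l)}$ composed with the blockwise $\phi^{(k)}$ recovers the paper's commutative diagram, so the two proofs meet in the end.
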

\begin{proof}
Define $\tilde{\phi}^{(k)} : \F_{q}^n \rightarrow \F_{q^{k}}^l$,
\[\tilde{\phi}^{(k)}(v_1,\dots,v_n):=\left(\phi^{(k)}(v_1,\dots,v_k),\dots,
\phi^{(k)}(v_{n-k+1},\dots,v_n)\right).\]
Then $\phi^{(l)}, \tilde{\phi}^{(k)}$ are vector space isomorphisms and $\phi = \phi^{(l)}\circ \tilde{\phi}^{(k)}$ satisfies the following diagram
\[
\xymatrix{
\F_{q}^n \ar[dr]_{\tilde{\phi}^{(k)}} \ar[rr]^\phi & & \F_{q^n}\\  
&\F_{q^k}^l\ar[ur]_{\phi^{(l)}}&
}
.\]

% It holds that
% \begin{align*}
% e_i &\longmapsto \alpha^{i-1} &\forall& i=1,\dots,k\\
% e_i &\longmapsto \alpha^{i-1}\beta &\forall& i=k+1,\dots,2k\\
% \vdots && \vdots\\
% e_i &\longmapsto \alpha^{i-1}\beta^{r-1}  &\forall& i=(r-1)k+1,\dots,rk  
% \end{align*}
% and thus $\phi = \phi^{(k)} \circ \phi^{(r)}$.
\end{proof}

Note, that $\phi$ can be applied on sets of vectors (e.g. vector spaces) element-wise.

\begin{lem}\label{lem2}
Denote by $P[i]$ the $i$-th row vector of $P$. Then
\[\phi^{(k)} (P^h[i]) = \alpha^{h+i-1} \]
for $i=1,\dots, k$ and $h=1,\dots,q^{k}-1$.
\end{lem}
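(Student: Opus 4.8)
The plan is to reduce the statement to a single multiplicative identity, namely that pre-composing $\phi^{(k)}$ with right multiplication by $P$ corresponds to multiplication by $\alpha$ in $\F_{q^k}$, and then to iterate.

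\textbf{Reduction.} First I would observe that the $i$-th row of $P^h$ can be written as $P^h[i] = e_i P^h$, where $e_1,\dots,e_k$ denotes the standard basis of $\F_q^k$ viewed as row vectors. Hence it suffices to understand how $\phi^{(k)}$ transforms under one right multiplication by $P$, since $\phi^{(k)}(e_i) = \alpha^{i-1}$ by definition.

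\textbf{Key identity.} I claim that $\phi^{(k)}(vP) = \alpha\,\phi^{(k)}(v)$ for every $v=(v_1,\dots,v_k)\in\F_q^k$. Reading off $vP$ column by column from the shape of $P$ gives $(vP)_1 = -p_0 v_k$ and $(vP)_j = v_{j-1} - p_{j-1}v_k$ for $2\le j\le k$. Substituting into the definition of $\phi^{(k)}$ and collecting terms yields
\[
\phi^{(k)}(vP) \;=\; \sum_{i=1}^{k-1} v_i\,\alpha^i \;-\; v_k\sum_{i=0}^{k-1} p_i\,\alpha^i .
\]
Now one uses that $p$ is the monic minimal polynomial of $\alpha$, so $\alpha^k = -\sum_{i=0}^{k-1} p_i\,\alpha^i$; the second sum becomes $v_k\,\alpha^k$ and the right-hand side collapses to $\sum_{i=1}^{k} v_i\,\alpha^i = \alpha\,\phi^{(k)}(v)$. (Equivalently, this says that $\F_q[P]\cong\F_{q^k}$ via $P\mapsto\alpha$ and that $\phi^{(k)}$ is the corresponding $\F_q$-module isomorphism intertwining the two actions.)

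\textbf{Iteration and conclusion.} Applying the key identity $h$ times gives $\phi^{(k)}(e_i P^h) = \alpha^h\,\phi^{(k)}(e_i) = \alpha^h\,\alpha^{i-1} = \alpha^{h+i-1}$, which is the assertion; for $h$ up to $\mathrm{ord}(P)=q^k-1$ the exponent is read modulo $q^k-1$, consistently with $P^{q^k-1}=I$ and $\alpha^{q^k-1}=1$. The only step carrying any content is the computation of the key identity, where care is needed in handling the last column of $P$ and in invoking the relation $p(\alpha)=0$ to fold $\alpha^k$ back into lower powers; once that finite computation is done, the rest is purely formal.
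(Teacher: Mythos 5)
Your proposal is correct and follows essentially the same route as the paper: both hinge on the intertwining identity $\phi^{(k)}(vP)=\alpha\,\phi^{(k)}(v)$ (proved via the companion-matrix structure and the relation $p(\alpha)=0$) and then iterate, the only cosmetic difference being that you start from $\phi^{(k)}(e_i)=\alpha^{i-1}$ and apply the identity $h$ times while the paper starts from $\phi^{(k)}(P[i])=\alpha^{i}$ and applies it $h-1$ times. Your explicit column-by-column computation of $vP$ is in fact a cleaner justification of the key identity than the paper's rather terse display.
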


\begin{proof}
It is easy to see that $\phi^{(k)}(P[i])=\alpha^{i}$ for $i\in \{1,\dots,k\}$.  
Moreover, $\phi^{(k)}$ is commutative with the multiplication with $P$ and $\alpha$:
\begin{align*}
\phi^{(k)}(uP)=& \sum_{i,j=1}^ku_{i }P_{ij}\alpha^{j-1}\\
=& \sum_{i=1}^ku_{i}\alpha^{i} \left(\sum_{j=1}^k P_{ij} \alpha^{j-i-1}\right)\\
=& \sum_{i=1 }^k u_{i} \alpha^{i} = \phi^{(k)}(u)\alpha
\end{align*}
for all $u=(u_1,\dots,u_{k}) \in \F_{q}^{k}$. 
\[\implies \phi^{(k)} (P^h[i]) = \phi^{(k)} (P[i]P^{h-1})= \phi^{(k)}
(P[i])\alpha^{h-1}= \alpha^{h+i-1} \]
\end{proof}

Recall that the spread code elements are of the type
\[\rs \left[\begin{array}{cccc} B_0 & B_1 & \dots &
    B_{l-1} \end{array}\right] \quad \in \G\]
where each block $B_i$ is an element of $\F_q[P]$ and the first non-zero block from the left is the identity.

\begin{thm}\label{t:unique_ident}
Define
\[\gamma_{j}:=\left\{\begin{array}{ll} 0 & \textnormal{ if }
    B_j=0 \\ \alpha^h & \textnormal{ if }
    B_j=P^h \end{array}\right.  \in \F_{q^{k}} .\]
Then
\[\phi(\rs \left[\begin{array}{cccc} B_0 & B_1 & \dots & B_{l-1} \end{array}\right]) = \F_{q^k} \cdot \sum_{j=0}^{l-1} \gamma_j \beta^j .\]
Hence, we can uniquely identify each spread code element by the
respective $\gamma=(\gamma_0,\dots,\gamma_{l-1})\in \F_{q^k}^l$.
\end{thm}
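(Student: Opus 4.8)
The plan is to compute the image of a generic spread codeword under $\phi$ directly, using the block structure and Lemma~\ref{lem2}. Write $\mathcal{U} = \rs[B_0\ B_1\ \cdots\ B_{l-1}]$ where each $B_j \in \F_q[P]$; since each nonzero $B_j$ is of the form $P^{h_j}$ (here I would note that $\F_q[P] \setminus \{0\}$ consists exactly of the powers of $P$, because $\F_q[P]\cong\F_{q^k}$ is a field whose multiplicative group is cyclic of order $q^k-1$ generated by $P$, as $\mathrm{ord}(P) = q^k - 1$), the $i$-th row of the generating matrix is $(P^{h_0}[i]\ |\ P^{h_1}[i]\ |\ \cdots\ |\ P^{h_{l-1}}[i])$, with the convention that the block is the zero row vector when $B_j = 0$. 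A basis for $\mathcal{U}$ is given by these $k$ rows, $i = 1, \dots, k$.

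The key computation is then to apply $\phi = \phi^{(l)} \circ \tilde\phi^{(k)}$ to row $i$. By definition of $\tilde\phi^{(k)}$, the $j$-th coordinate of $\tilde\phi^{(k)}$ applied to row $i$ is $\phi^{(k)}(P^{h_j}[i])$, which by Lemma~\ref{lem2} equals $\alpha^{h_j + i - 1} = \alpha^{i-1}\gamma_j$ when $B_j \ne 0$, and equals $0 = \alpha^{i-1}\gamma_j$ when $B_j = 0$ (so the formula $\phi^{(k)}(B_j[i]) = \alpha^{i-1}\gamma_j$ holds uniformly, with the slight care that Lemma~\ref{lem2} is stated for $h \ge 1$, i.e. $h = 0$ gives $\phi^{(k)}(P^0[i]) = \phi^{(k)}(\mathrm{Id}[i]) = \alpha^{i-1}$, which still fits). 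Hence $\tilde\phi^{(k)}(\text{row } i) = \alpha^{i-1}(\gamma_0, \dots, \gamma_{l-1})$, and applying $\phi^{(l)}$ gives $\phi(\text{row } i) = \alpha^{i-1}\sum_{j=0}^{l-1}\gamma_j\beta^j$. So the images of the $k$ basis rows are $\alpha^{i-1}\xi$ for $i = 1, \dots, k$, where $\xi := \sum_{j=0}^{l-1}\gamma_j\beta^j \in \F_{q^n}$.

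It remains to identify the $\F_q$-span of $\{\xi, \alpha\xi, \dots, \alpha^{k-1}\xi\}$ with $\F_{q^k}\cdot\xi$. Since $\phi$ is an $\F_q$-vector space isomorphism (by the Proposition), $\phi(\mathcal{U})$ is the $\F_q$-span of these $k$ elements. If $\xi = 0$ (the all-zero codeword, which is excluded anyway since the first nonzero block is the identity, but the formula is harmless) both sides are $\{0\}$; otherwise $\{1, \alpha, \dots, \alpha^{k-1}\}$ is an $\F_q$-basis of $\F_{q^k}$, so multiplying by the nonzero scalar $\xi$ shows $\mathrm{span}_{\F_q}\{\alpha^{i-1}\xi : i = 1,\dots,k\} = \F_{q^k}\cdot\xi$, which is exactly the claimed equality. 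The uniqueness statement then follows because $\F_{q^k}\cdot\xi = \F_{q^k}\cdot\xi'$ as subsets of $\F_{q^n}$ forces $\xi' = c\xi$ for some $c \in \F_{q^k}^*$; combined with the normalization that the first nonzero block $B_j$ equals the identity (equivalently, the first nonzero $\gamma_j$ equals $1$), this pins down $\gamma$ uniquely — scaling by $c$ would change which power of $P$ sits in that first block, contradicting the identity normalization.

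I do not expect a serious obstacle here; the argument is essentially bookkeeping on top of Lemma~\ref{lem2}. The one point requiring a little care is the boundary case $h = 0$ in applying Lemma~\ref{lem2} (the lemma is stated for $h = 1, \dots, q^k - 1$, so one should either observe separately that $\phi^{(k)}(\mathrm{Id}[i]) = \alpha^{i-1}$ or note $P^{q^k-1} = \mathrm{Id}$ so $h=0$ can be read as $h = q^k-1$ and the formula $\alpha^{h+i-1}$ still gives $\alpha^{i-1}$). The other mildly delicate step is spelling out why the subset equality $\F_{q^k}\xi = \F_{q^k}\xi'$ together with the echelon-form normalization yields $\gamma = \gamma'$; this is where the hypothesis that the first nonzero block is the identity is actually used, and it should be stated explicitly rather than glossed over.
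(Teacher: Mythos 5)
Your proof is correct and follows essentially the same route as the paper's: apply $\phi^{(k)}$ blockwise to each row via Lemma~\ref{lem2} to get $\alpha^{i-1}\gamma_j$, attach the factor $\beta^j$ for the block position, and conclude that the $\F_q$-span of $\{\alpha^{i-1}\xi\}_{i=1}^{k}$ is $\F_{q^k}\cdot\xi$. You are merely more explicit than the paper about the $h=0$ boundary case of Lemma~\ref{lem2} and about why the echelon-form normalization makes $\gamma$ unique, both of which the paper leaves implicit.
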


\begin{proof}
Denote by $B_{j}[i]$ the $i$-th row vector of the block $B_j$. From Lemma \ref{lem2} we know that $\phi^{(k)}(B_{j}[i])=\alpha^{i-1} \gamma_j$.
The power of $\beta$ corresponds to the position of the block $B_j$, thus in general $\phi$ maps the $i$-th row of the whole matrix to 
\[\sum_{j=0}^{l-1} \alpha^{i-1}\gamma_{j} \beta^j =  \alpha^{i-1}\sum_{j=0}^{l-1}\gamma_{j} \beta^j \quad \forall i=1,\dots,k .\] 
As $\alpha$ is a primitive element of $\F_{q^k}$, the elements of the vector space are exactly mapped to $\F_{q^k} \cdot \sum_{j=0}^{l-1}\gamma_{j} \beta^j$.
\end{proof}

\begin{cor}
A spread code $\C \subseteq \G$ constructed as before is isomorphic to $\mathcal{G}_{q^{k}}(1,l)$.
\end{cor}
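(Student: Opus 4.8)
The plan is to exhibit an explicit bijection between the spread code $\C$ and $\mathcal{G}_{q^k}(1,l)$ and check that it preserves incidence/distance structure, i.e.\ that it is an isomorphism of the associated geometries. The natural candidate is the map induced by $\phi$: by Theorem~\ref{t:unique_ident}, every codeword $\U \in \C$ satisfies $\phi(\U) = \F_{q^k}\cdot w_\gamma$ where $w_\gamma = \sum_{j=0}^{l-1}\gamma_j\beta^j \in \F_{q^n}$ and $\gamma=(\gamma_0,\dots,\gamma_{l-1})\in\F_{q^k}^l$. Composing with the inverse of the $\F_{q^k}$-vector space isomorphism $\phi^{(l)}:\F_{q^k}^l\to\F_{q^n}$, this sends $\U$ to the $1$-dimensional $\F_{q^k}$-subspace $\F_{q^k}\cdot\gamma$ of $\F_{q^k}^l$, which is precisely an element of $\mathcal{G}_{q^k}(1,l)$.

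First I would verify this map is well defined and injective: Theorem~\ref{t:unique_ident} already gives that distinct codewords correspond to distinct $\gamma$ up to $\F_{q^k}$-scalars (the first nonzero block being the identity pins down a unique representative), so the assignment $\U\mapsto\F_{q^k}\cdot\gamma$ is injective. For surjectivity, I would note that every nonzero $\gamma\in\F_{q^k}^l$ arises: pick the first nonzero coordinate $\gamma_{j_0}$, normalize so that $\gamma_{j_0}=1$, write each remaining $\gamma_j$ as $0$ or $\alpha^{h_j}$, and read off the corresponding blocks $B_j\in\F_q[P]$ ($0$ or $P^{h_j}$) with $B_{j_0}=I$; by construction this yields a codeword of $\C$ mapping to $\F_{q^k}\cdot\gamma$. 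A cardinality count ($|\C| = (q^n-1)/(q^k-1) = |\mathcal{G}_{q^k}(1,l)|$) also confirms bijectivity once injectivity is known.

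Finally I would observe that $\phi$, and hence $\phi^{(l)\,-1}\circ\phi$, is an $\F_q$-vector space isomorphism $\F_q^n\to\F_{q^k}^l$, so it preserves dimension of $\F_q$-subspaces and of their sums and intersections; consequently it preserves the subspace distance $d_S$. Thus $\C$ and $\mathcal{G}_{q^k}(1,l)$ are isometric as subspace codes, which is the content of "isomorphic." I expect the only mild subtlety — not a real obstacle — to be bookkeeping: being careful that the normalization "first nonzero block is the identity" on the matrix side corresponds exactly to a choice of representative of a projective point over $\F_{q^k}$, and that the claim "isomorphic to $\mathcal{G}_{q^k}(1,l)$" is to be read at the level of the metric (subspace-distance) structure rather than merely as sets, which follows immediately from $\phi$ being a linear isomorphism. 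The substantive work has already been done in Theorem~\ref{t:unique_ident}; this corollary is essentially a repackaging of it.
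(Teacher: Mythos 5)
Your proposal is correct and follows essentially the same route as the paper: both rest on Theorem~\ref{t:unique_ident} identifying $\phi(\U)$ with the $\F_{q^k}$-line spanned by $\gamma$, the paper's own proof being a one-line version that gets surjectivity from the fact that a spread covers the whole space rather than from your explicit reconstruction of the blocks $B_j$ from $\gamma$ (or the cardinality count). Your extra care about the normalization and about reading ``isomorphic'' at the level of the metric structure is reasonable but not present in, nor needed for, the paper's argument.
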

\begin{proof}
Since a spread code covers the whole space and $\phi$ maps a code word to an $\F_{q^{k}}$-linear subspace with basis vector $\gamma$, the statement holds.
\end{proof}

%____________________________________________________________________________________________________________________________________

\subsection{Decoding}

One can now use this structure for the decoding procedure in this representation. 

First assume only erasures and no errors happened during transmission. Then any received vector space $\R$ with $\dim(\R)\geq 1$ can be decoded to its closest code word, since the number of errors and erasures is less than or equal to $k-1= \frac{2k-2}{2}= t$. For decoding choose an element of the received space $r \in \R$ and compute $\gamma=(\gamma_0,\dots,\gamma_{l-1}) \in \F_{q^{k}}^{l}$ such that
\[\phi(r)= \alpha^{i-1}\sum_{j=0}^{l-1}\gamma_{j} \beta^j\]
for some $i$. For this, divide $r$ into $l$ blocks of size $k$,
$r_{1},\dots,r_{l}$, and find the first non-zero block, denoted by
$r_{s}$. It holds that $r_{s}={\phi^{(k)}}^{-1}(\alpha^{i-1})$, since
the first non-zero block is the identity matrix in the
construction. Then $\gamma$ can be computed by at most one inversion
and $l$ multiplications in $\F_{q^{k}}$ since
\[\gamma= (\phi^{(k)} (r_{1}) \phi^{(k)} (r_{s})^{-1} ,  \dots  , \phi^{(k)} (r_{l}) \phi^{(k)}(r_{s})^{-1} ).\]
Note, that one does not need to compute the discrete logarithm to find the $i$. It is enough to compute $\gamma$ to identify the code word.

\begin{ex}
Let $p(x)=x^3+x+1 \in \F_2[x]$ and consider   
  the spread code of constant dimension $3$ over $\F_2^6$ generated by it. Let
  $r=(110 | 101)$ be a received vector. It holds that
  $\phi(r)=1+\alpha + \beta + \alpha^2 \beta$.  The first three vector
  entries tell you that you have to divide by $1+\alpha$ to compute
  $\gamma$:
\[(1+\alpha + \beta + \alpha^2 \beta) {(1+\alpha)}^{-1} = 1+(1+\alpha) \beta\]
Hence, $\gamma= (1, 1+\alpha) $, which identifies the codeword 
\[ \rs\left[\begin{array}{c|c}
       I & {I+ P}
      \end{array}\right] =
\rs\left[\begin{array}{ccc|ccc}
       1&0&0&1&1&0\\
       0&1&0&0&1&1\\
       0&0&1&1&1&1   
      \end{array}\right]
.\]
\end{ex}

But what if errors were inserted? 
Let $\U \in \C$ be the sent code word and denote by $k'$ the dimension of the received vector space $\R \in \PG$. 
For correct decoding it has to hold that 

\begin{eqnarray*}
&d_{S} (\U, \R) \leq \lfloor\frac{d(\C)-1}{2}\rfloor \\
\iff &k+k' - 2\dim(\U \cap \R) \leq k-1\\
\iff &   \dim(\U \cap \R) \geq \frac{k'+1}{2} ,
\end{eqnarray*}

% \begin{align*}
% d_{S} (\U, \R) &\leq \frac{d(\C)-1}{2} \\
% \iff k+k' - 2\dim(\U \cap \R) &\leq k-1\\
% \iff \hspace{1.57cm}  \dim(\U \cap \R) &\geq \frac{k'+1}{2} ,
% \end{align*}
%Hence, we can correct up to ${k-1}$ errors and erasures, which corresponds to an at most $\lfloor \frac{k-1}{2} \rfloor$-dimensional erroneous subspace of a $k$-dimensional received subspace.
therefore one needs to find $\lceil\frac{k'+1}{2}\rceil$ linearly independent elements of $\R$ with the same respective $\gamma$, called $\gamma_{\max}$. Then we decode to the codeword
\[\phi^{-1}(\F_{q^k}\cdot \sum_{j=0}^{l-1} {\gamma_{\max}}_j \beta^j ).\]
%In general it might not be enough to only check the received independent vectors since they might all be a linear combination of vectors including an error. 
Since we do not know if any or which of the elements of $\R$ are erroneous, one needs to examine not only a basis but all elements of $\R$.
%As soon as one found the same $\gamma$ in at least $\lceil \frac{k+1}{2} \rceil$ linearly independent elements of the vector space, one can decode to the corresponding code word.

A first basic decoding algorithm in this extension field representation is given in Algorithm \ref{alg1}.
All field operations are done over $\mathbb{F}_{q^k}$.

\begin{algorithm}
\caption{Basic decoding algorithm.}
\label{alg1}                       
\begin{algorithmic}             
\REQUIRE the received vector space $\R \in \PG$, $k'=\dim(\R)$

%\FOR{each $V \in \rs\left[\begin{array}{c}R_1\\\vdots \\R_l\end{array}\right]$}
\FOR{each $v \in \R$}
\STATE divide $v$ into blocks $v_0,\dots, v_{l-1}$ of length $k$
\STATE $v_s:=$ the first block from the left with non-zero entries
\STATE $a:=(\phi^{(k)}(v_s))^{-1}$ 
\STATE store $\gamma_{v}:=(\phi^{(k)}(v_0)\cdot a,\dots, \phi^{(k)} (v_{l-1})\cdot a)$
%\UNTIL{you have $f+1=\lfloor\frac{\delta+1}{2}\rfloor$-many linearly independent $x_i$ with the same $t_i=:t^*$.}
\ENDFOR
\STATE $\gamma_{\max} :=$ the element of highest multiplicity in $\{\gamma_{v} | v \in \R\}$
\IF{there are $ \geq\lceil \frac{k'+1}{2} \rceil$ linearly independent $v\in \R$ such that $\gamma_{v}=\gamma_{\max}$} 
\RETURN $\phi^{-1}(\F_{q^k}\cdot \sum_{j=0}^{l-1} {\gamma_{\max}}_j \beta^j)$
\ELSE 
\RETURN ``not decodable''
\ENDIF
\end{algorithmic}
\end{algorithm}

%_____________________________________________________________________________________________________________________________________

\subsection{Improvements on the Algorithm}

We improve the algorithm by systematically choosing the linear combinations of the basis vectors of the received space to work with. 
%Throughout the chapter we will work with the assumption that a receiver collects $k$ linearly independent vectors to start the decoding process. Nevertheless, the results and algorithm can be adapted to other receiver models.
For it, note that errors are canceled out in some linear combinations of elements, as illustrated in the following example.

\begin{ex}
Assume $\U \in \C$ was sent and consider two elements of the received space $r_{1}, r_2 \in\R$ containing the same error $e \in \F_{q}^{n}$, i.e.
\[r_{1}=\sum_{u\in \U} \lambda_{u} u +e  \quad, \quad r_{2}=\sum_{u\in\U} \mu_{u} u + e \]
for some $\lambda_u, \mu_u \in \F_q$. Then 
\[r_{1}+(q-1)r_{2} = \sum_{u\in \U} \lambda_{u} u +e -\sum_{u\in \U} \lambda_{u} u -e =\sum_{u\in \U} (\lambda_{u}-\mu_{u}) u  \quad \in \U .\]
\end{ex}

Let us generalize this idea to arbitrary numbers of errors.

\begin{prop}\label{p5}
  Let $u_{1},\dots,u_{k} \in \F_{q}^{n}$ be a basis of the sent code
  word $\U \in \G$ and $r_{1},\dots, r_{k'} \in \F_{q}^{n}$ a basis of
  the received space $\R$. Assume $f<k'$ linearly independent error
  vectors were inserted during transmission, i.e. $\R= \U' \oplus
  \mathcal{E}$, where $\U'$ is a subspace of $\U$ and $\mathcal{E}$ is
  the vector space of dimension $f$ spanned by the error vectors. Then
  the set
\[\left\{\sum_{i\in I} \lambda_{i} r_{i} \mid  \lambda_{i} \in \F_{q},
  I\subset \{1,\dots,k'\}, |I|=f+1 \right\} \]
contains $k'-f$ linearly independent elements of $\U$.
\end{prop}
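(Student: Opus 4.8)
The plan is to translate everything into coordinates with respect to the given basis $r_1,\dots,r_{k'}$ of $\R$ and to reduce the statement to an elementary fact about low-weight vectors in a subspace of $\F_q^{k'}$.

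First I would set up the dimension bookkeeping: from $\R=\U'\oplus\E$ with $\dim\R=k'$ and $\dim\E=f$ we get $\dim\U'=k'-f$, and by hypothesis $\U'\subseteq\U$. Then consider the linear map
\[
\psi\colon \F_q^{k'}\longrightarrow \R/\U',\qquad (\lambda_1,\dots,\lambda_{k'})\longmapsto \Big(\sum_{i=1}^{k'}\lambda_i r_i\Big)+\U',
\]
obtained by composing the coordinate isomorphism $\F_q^{k'}\to\R$ attached to the basis $r_1,\dots,r_{k'}$ with the canonical projection $\R\to\R/\U'$. Since $\R/\U'\cong\E$ has dimension $f$, the map $\psi$ is surjective, so its kernel $K=\{\lambda\in\F_q^{k'}\mid \sum_i\lambda_i r_i\in\U'\}$ has dimension $k'-f$.

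The main point is then the standard fact, which I would prove by row reduction, that every subspace $K\subseteq\F_q^{k'}$ of dimension $d$ has a basis of vectors of Hamming weight at most $k'-d+1$: a generator matrix of $K$ in reduced row echelon form has $d$ linearly independent rows, each carrying a $1$ in its own pivot column, a $0$ in the other $d-1$ pivot columns, and arbitrary entries in the remaining $k'-d$ columns, hence weight at most $1+(k'-d)=k'-d+1$. Applying this with $d=k'-f$ yields a basis $\lambda^{(1)},\dots,\lambda^{(k'-f)}$ of $K$ with $|\operatorname{supp}(\lambda^{(m)})|\le f+1$ for every $m$.

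Finally I would transport this basis back to $\R$: put $v_m=\sum_i\lambda^{(m)}_i r_i$. By construction $v_m\in\U'\subseteq\U$, and since $\operatorname{supp}(\lambda^{(m)})$ has at most $f+1$ elements I may enlarge it to an index set $I_m$ with $|I_m|=f+1$ by appending zero coefficients, so that $v_m=\sum_{i\in I_m}\lambda^{(m)}_i r_i$ lies in the set in the statement. Because $r_1,\dots,r_{k'}$ is a basis, the coordinate map is injective and carries the linearly independent vectors $\lambda^{(1)},\dots,\lambda^{(k'-f)}$ to linearly independent vectors $v_1,\dots,v_{k'-f}$, giving the claim. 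The only steps that need care are the weight bound extracted from the echelon form and the small bookkeeping needed to pass from ``$|I|\le f+1$'' to ``$|I|=f+1$''; everything else is dimension counting.
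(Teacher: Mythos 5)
Your proof is correct, and it takes a genuinely different route from the paper's. The paper argues by induction on $f$: it writes each $r_i$ explicitly as a combination of the $u_j$ and the error vectors, cancels the last error $e_f$ pairwise among the vectors that involve it (producing $l-1$ new error-free-in-$e_f$ combinations from $l$ vectors), and then recurses on the remaining $f-1$ errors; this mirrors the round-by-round structure of the improved decoding algorithm (pairs, then triples, etc.) but requires some bookkeeping, and the induction step leaves a couple of points implicit (e.g.\ that the new elements $m_1,\dots,m_{l-1}$ stay linearly independent from $r_{l+1},\dots,r_{k'}$ and that the collection they feed back into the induction still satisfies the hypotheses). Your argument instead identifies the error-free combinations in one shot as the kernel $K$ of the coordinate map composed with the projection $\R\to\R/\U'\cong\E$, so $\dim K = k'-f$, and then invokes the reduced-row-echelon-form weight bound (the same fact behind the Singleton bound) to extract a basis of $K$ of Hamming weight at most $f+1$. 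This is shorter, fully rigorous, and even yields the slightly stronger conclusion that the $k'-f$ elements lie in $\U'$ itself; your explicit padding of the support to an index set of size exactly $f+1$ also handles a detail the paper glosses over (its own construction produces some combinations with $|I|<f+1$). What you lose relative to the paper is only the explicit error-cancellation mechanism, which is what motivates the algorithmic improvement in the surrounding text.
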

\begin{proof}
 Inductively on $f$: 
\begin{enumerate}
 \item 
If $f=0$, then $r_{1},\dots,r_{k'} \in\U$.
 \item
If $f=1$, assume $r_1,\dots,r_l  \not\in \mathcal{U}'$ and $r_{l+1},\dots,r_{k'}\in \U'$. Then there exist $\lambda_{ij}\in\F_q, \mu_i \in \F_q\setminus \{0\}$ such that
\[r_i = \sum_{j=1}^k \lambda_{ij} u_j + \mu_i e \quad \forall \; i=1,\dots,l\]
where $e \in \mathcal{E}$ denotes the error vector. Hence $\forall \; i,h=1,\dots,l$
\[r_i + r_h = \sum_{j=1}^{k} (\lambda_{ij}+\lambda_{hj}) u_j + (\mu_i
+\mu_h)e \]
\[\implies r_i + (-\mu_i \mu_h^{-1}) r_h = \sum_{j=1}^k (\lambda_{ij}-\mu_i\mu_h^{-1}\lambda_{hj}) u_j \quad \in \U'.\]
Then the elements $r_{l+1},\dots,r_{k'}, r_1 + (-\mu_1 \mu_2^{-1})
r_2,\dots,r_1 + (-\mu_1 \mu_l^{-1}) r_l$ are $k'-1$ linearly
independent elements without errors.
 \item If more errors, say $e_{1},\dots,e_{f}$, were inserted, then one can inductively ``erase'' $f-1$ errors in the linear combinations of at most $f$ elements. Write the received elements as
 \[r_i = \sum_{j=1}^k \lambda_{ij} u_j + \sum_{j=1}^f \mu_{ij} e_{j} \quad \forall \; i=1,\dots,k' \]
 with $\lambda_{ij}, \mu_{ij} \in \F_q$.
 Assume $\mu_{1f},\dots,\mu_{lf} \neq 0$ and $\mu_{(l+1)f},\dots,\mu_{k'f}=0$, i.e. the first $l$ elements involve $e_{f}$ and the others do not. 

 From above we know that the linear combinations of any two elements of $r_{1},\dots,r_{l}$  include $l-1$ linearly independent elements without $e_{f}$. Denote them by $m_{1},\dots,m_{l-1}$. Naturally these elements are also linearly independent from $r_{l+1},\dots,r_{k'}$. Use the induction step on $m_{1},\dots,m_{l-1},r_{l+1},\dots,r_{k'}$ to get $k'-1-(f-1)=k'-f$ linearly independent elements without errors.
\end{enumerate}
\end{proof}

\begin{cor}
In the setting of Proposition \ref{p5} assume $d_{S}(\R,\U)\leq t$, i.e. $\R$ is decodable. Then there are at least $\lceil\frac{k'+1}{2}\rceil$ linearly independent elements of $\U$ in the set
\[\mathcal{L} := \left\{\sum_{i\in I} \lambda_{i} r_{i} \mid  \lambda_{i} \in \F_{q},
  I\subset \{1,\dots,k'\}, |I|=f+1 \right\} .\]
\end{cor}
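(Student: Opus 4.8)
The plan is to read off the conclusion from Proposition~\ref{p5} and reduce everything to one arithmetic inequality. Proposition~\ref{p5} already tells us that $\mathcal{L}$ contains $k'-f$ linearly independent elements of $\U$, so the whole content of the corollary is the estimate $k'-f \geq \lceil\frac{k'+1}{2}\rceil$. Since $k'-f$ is an integer, this is equivalent to $k'-f \geq \frac{k'+1}{2}$, i.e. to $f \leq \frac{k'-1}{2}$, and that is the inequality I would extract from the decodability hypothesis.

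To get the bound on $f$, I would express $d_S(\R,\U)$ through $f$. From $\R=\U'\oplus\mathcal{E}$ with $\U'\subseteq\U$ and $\dim\mathcal{E}=f$ one has $\dim\U' = k'-f$, and (see below) $\U\cap\R=\U'$, so $\dim(\U\cap\R)=k'-f$. Then, exactly as in the computation preceding Algorithm~\ref{alg1},
\[
d_S(\R,\U) = k + k' - 2\dim(\U\cap\R) = k - k' + 2f .
\]
Now the hypothesis $d_S(\R,\U)\le t=\lfloor\tfrac{d(\C)-1}{2}\rfloor=k-1$ turns into $k-k'+2f\le k-1$, hence $f\le\frac{k'-1}{2}$. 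Substituting back gives $k'-f \geq k'-\frac{k'-1}{2}=\frac{k'+1}{2}$, and since $k'-f\in\mathbb{Z}$ this is $k'-f\geq\lceil\frac{k'+1}{2}\rceil$; by Proposition~\ref{p5} that many linearly independent elements of $\U$ lie in $\mathcal{L}$.

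The one point that needs care — and the only place an objection could arise — is the identification $\U\cap\R=\U'$: a priori one only knows $\U'\subseteq\U\cap\R$, and in general $\U\cap\R=\U'\oplus(\mathcal{E}\cap\U)$. The argument therefore uses that $f$ counts the genuine number of inserted errors, i.e. that $\mathcal{E}\cap\U=\{0\}$. I would make this explicit, noting it is harmless: if some error vector lay in $\U$ we could absorb it into $\U'$ and shrink $\mathcal{E}$, which only decreases $f$ and hence only strengthens the conclusion. Everything else is a direct substitution, so beyond stating this convention cleanly I anticipate no real obstacle.
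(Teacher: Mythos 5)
Your proof is correct and follows essentially the same route as the paper's: both reduce the claim to the inequality $f\le\frac{k'-1}{2}$ (the paper by counting errors and erasures, $f+\bar f\le t$ with $\bar f=f+k-k'$; you via the equivalent dimension formula $d_S(\R,\U)=k+k'-2\dim(\U\cap\R)$) and then invoke Proposition~\ref{p5}. Your added care about $\U\cap\R=\U'$ and the integrality step yielding the ceiling are minor points the paper leaves implicit.
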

\begin{proof}
Let $\bar{f}$ denote the number of erasures. Then $\bar{f}=f+k-k'$ and thus
\[f+\bar{f} \leq t \iff 2f+k-k'\leq k-1 \iff f\leq \frac{k'-1}{2} .\]
With Proposition \ref{p5} it follows that $\mathcal{L}$ contains $k'-f\geq \frac{k'+1}{2}$ linearly independent vectors of the sent vector space.
\end{proof}

We use this fact to modify Algorithm \ref{alg1} as follows: We choose
a basis $r_{1},\dots,r_{k'}$ of the received space $\R\in \PG$ and
compute $\gamma_{r_{i}}$ for $i=1,\dots,k'$. Then we compute the
respective $\gamma$ of all linear combinations of two basis elements,
then of three elements etc. As before we can stop the process and
decode to a code word as soon as we have more than or equal to $\lceil \frac{k'+1}{2} \rceil$ linearly independent elements with the same $\gamma$.
This way, if $f$ errors occurred, we do not have to consider all elements of $\R$ but only the linear combinations of at most $f+1$ of the basis vectors.

\begin{comment}
For simplicity we will describe the improved algorithm over $\F_2$ in Algorithm \ref{alg2}.

\begin{algorithm}
\caption{Improved Decoding Algorithm over $\mathbb{F}_2$.}
\label{alg2}                       
\begin{algorithmic}[1]                
\REQUIRE $R_1,\dots,R_k \in \F_2^n$ the received vectors (linearly independent), $f$ the maximal number of possible errors 
\STATE gammaset $:=[]$, $z:=1$, maxi $:=1$, $\gamma_{\max}:=0$
\WHILE{maxi $<\lceil \frac{k+1}{2} \rceil$ and $z\leq f$}
\STATE set $:=\{\sum_{j=1}^{z}  R_{i_j}\mid i_j\in\{1,\dots,k\} \textnormal{ distinct }\}$
\FOR{$V\in$ set} 
\STATE divide $V$ into blocks $V_0,\dots, V_{r-1}$ of length $k$
\STATE $V_s:=$ the first block from the left with non-zero entries
\STATE $a:=(\phi^{(k)}(V_s))^{-1}$
\STATE $\gamma(V):=(\phi^{(k)}(V_0)\cdot a,\dots, \phi^{(k)} (V_{r-1})\cdot a)$

\IF{$\gamma(V)\not\in $ gammaset}
\STATE linset$[\gamma(V)]=[V]$
\STATE gammaset $:=$ gammaset.append($\gamma (V)$)
\STATE count$[\gamma(V)]:= 1$
\ELSE 
\IF{ $V$ linearly independent from the elements of linset$[\gamma(V)]$}
\STATE linset$[\gamma(V)]$.append($V$)
\STATE count$[\gamma(V)]:= $ count$[\gamma(V)]$+1
\IF{maxi $< $ count$[\gamma(V)]$}
\STATE maxi $:=$ count$[\gamma(V)]$
\STATE $\gamma_{\max}:=\gamma(V)$
\ENDIF
\ENDIF
\ENDIF

\ENDFOR
\STATE $z:=z+1$
\ENDWHILE
\IF{maxi $\geq \lceil \frac{k+1}{2} \rceil$}
\RETURN $\phi^{-1}(\F_{2^k}\cdot \sum_{j=0}^{r-1} {\gamma_{\max}}_j \beta^j)$.
\ELSE \RETURN not decodable
\ENDIF
\end{algorithmic}
\end{algorithm}
\end{comment}

Moreover note, that
%\[\lambda \alpha^{i} \sum_{j=0}^{r-1} \gamma_j \beta^j + \mu \alpha^{j} \sum_{j=0}^{r-1} \gamma_j \beta^j = \alpha^{l} \sum_{j=0}^{r-1} \gamma_j \beta^j\] hence, 
a linear combination of elements with the same $\gamma$ is always
another element with $\gamma$. Since we need to find linearly
independent elements, it is therefore enough to check only
combinations of elements with different respective $\gamma$'s.

 It is possible to further improve the algorithm by restricting the
 elements of the basis which are used in Proposition \ref{p5}. 

\begin{lem}
  Let $\U\in \C$ and $\R\in PG(n,q)$ of dimension $k'$ such that
  $d_S(\U,\R)\leq k-1$. Let $B=\{r_1,\dots,r_{k'}\}$ be a
  basis of the received space $\R$ such that the matrix obtained by
  stacking the basis elements is in reduced row echelon form. Let
  $r_i=(r_{i,0},\dots,r_{i,l-1})\in (\F_q^k)^l$, where $kl=n$. Consider the
  partition of $B$ into the subsets
  \[B_j=\{r_i\mid r_{i,t}=0 \ \forall t<j \mbox{ and } r_{i,j}\neq
  0\}\]
  for $j=0,\dots,l-1$.
  Then, \[\U\cap\R\subseteq \langle \bigcup_{j=j'}^{l-1}B_j\rangle\] where $j'$ is such that
  $|B_{j'}|\geq\lceil \frac{k'-1}2\rceil$.
\end{lem}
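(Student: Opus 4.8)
The plan is to analyze the structure of the spread code element $\U$ and its intersection with the received space $\R$ in terms of the block decomposition used throughout the paper. Recall from Theorem~\ref{t:unique_ident} that $\U$ is identified by a vector $\gamma=(\gamma_0,\dots,\gamma_{l-1})\in\F_{q^k}^l$, and that $\phi$ maps the rows of the generator matrix of $\U$ to $\alpha^{i-1}\sum_{j}\gamma_j\beta^j$; in particular, in reduced row echelon form, the leading block of $\U$ (the first $j$ with $\gamma_j\neq 0$) is the identity. Let $s$ denote this leading block index of $\U$, so $\gamma_t=0$ for $t<s$ and $\gamma_s\neq 0$. The first observation is that any vector in $\U$ has all blocks in positions $t<s$ equal to zero, i.e. $\U\subseteq\langle\bigcup_{j=s}^{l-1} e\text{-blocks}\rangle$ where I mean the coordinate subspace supported on blocks $s,\dots,l-1$.

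\medskip

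Next I would show $s\geq j'$, which combined with $\U\subseteq\langle\bigcup_{j\geq s} B_j^{\mathrm{coord}}\rangle$ and the fact that $\R\cap(\text{coord. subspace on blocks }\geq s)=\langle\bigcup_{j\geq s}B_j\rangle$ (because $B$ is in reduced row echelon form, so $B_j$ consists exactly of the rows whose pivot lies in block $j$) gives $\U\cap\R\subseteq\langle\bigcup_{j=s}^{l-1}B_j\rangle\subseteq\langle\bigcup_{j=j'}^{l-1}B_j\rangle$. To see $s\geq j'$: suppose for contradiction that $j'>s$, i.e. the largest block index $j'$ with $|B_{j'}|\geq\lceil\frac{k'-1}{2}\rceil$ satisfies $j'>s$. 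Every $r_i\in B_{j'}$ has zero blocks in all positions $t<j'$, in particular in position $s$ (since $s<j'$). But every vector of $\U$ has its block in position $s$ equal to a scalar multiple of ${\phi^{(k)}}^{-1}(\alpha^{i-1}\gamma_s)$ with $\gamma_s\neq 0$, so the only vector of $\U$ with zero block in position $s$ is the zero vector. Hence $\langle B_{j'}\rangle\cap\U=\{0\}$, which forces $\dim(\U\cap\R)\le k'-|B_{j'}|\le k'-\lceil\frac{k'-1}{2}\rceil=\lfloor\frac{k'+1}{2}\rfloor$. On the other hand the decodability hypothesis $d_S(\U,\R)\le k-1$ rearranges (as computed in the Corollary to Proposition~\ref{p5}, or directly from $k+k'-2\dim(\U\cap\R)\le k-1$) to $\dim(\U\cap\R)\geq\frac{k'+1}{2}$, hence $\dim(\U\cap\R)\ge\lceil\frac{k'+1}{2}\rceil$. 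I would then check that $\lceil\frac{k'+1}{2}\rceil>\lfloor\frac{k'+1}{2}\rfloor$ is impossible unless $k'$ is even, and handle the parity bookkeeping to derive the contradiction (the inequalities are tight enough that one needs to be slightly careful for even $k'$; there one uses that $B_{j'}$ contributes at least $\lceil\frac{k'-1}{2}\rceil=\frac{k'}{2}$ dimensions outside $\U$, giving $\dim(\U\cap\R)\le\frac{k'}{2}<\frac{k'+1}{2}$, still a contradiction). Therefore $j'\le s$.

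\medskip

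Finally I would assemble the pieces: since $j'\le s$ and $\U\cap\R\subseteq\U$ lives in the coordinate subspace supported on blocks $\ge s$, and $\U\cap\R\subseteq\R=\langle\bigcup_{j=0}^{l-1}B_j\rangle$, intersecting with the coordinate subspace on blocks $\ge s$ kills the contributions of $B_0,\dots,B_{s-1}$ (whose pivots are in those earlier blocks), leaving $\U\cap\R\subseteq\langle\bigcup_{j=s}^{l-1}B_j\rangle\subseteq\langle\bigcup_{j=j'}^{l-1}B_j\rangle$, as desired.

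\medskip

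The main obstacle I anticipate is the parity argument pinning down $j'\le s$: the bound $|B_{j'}|\ge\lceil\frac{k'-1}{2}\rceil$ and the decodability bound $\dim(\U\cap\R)\ge\lceil\frac{k'+1}{2}\rceil$ are complementary in size, so the contradiction is genuinely tight and requires careful handling of floors and ceilings for even versus odd $k'$; one must verify that a single block $B_{j'}$ of size $\lceil\frac{k'-1}{2}\rceil$ that is disjoint from $\U$ already overshoots the allowed codimension of $\U\cap\R$ in $\R$. The rest — that the leading block of $\U$ is the identity and hence nonzero in position $s$, and that reduced row echelon form makes $\langle\bigcup_{j\ge m}B_j\rangle$ exactly the part of $\R$ supported on blocks $\ge m$ — is routine given Theorem~\ref{t:unique_ident} and the definition of the partition.
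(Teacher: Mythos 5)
Your first and third paragraphs reproduce the paper's own argument: writing $s$ for the leading block index of $\U$ (the smallest $j$ with $\gamma_j\neq 0$ in the identifier of Theorem~\ref{t:unique_ident}), every element of $\U\cap\R$ vanishes on all blocks $t<s$, and by the reduced row echelon structure the part of $\R$ vanishing there is exactly $\langle \bigcup_{j\geq s}B_j\rangle$; that is precisely how the paper obtains the containment, with $j'$ \emph{equal to} $s$. The trouble is your second paragraph. In the paper the inequality $|B_{j'}|\geq\lceil\frac{k'-1}{2}\rceil$ is a \emph{conclusion} about $s$ (it follows because $\U\cap\R$ meets $\langle\bigcup_{j>s}B_j\rangle$ trivially, so $\dim(\U\cap\R)\leq |B_s|$, while decodability forces $\dim(\U\cap\R)\geq\lceil\frac{k'+1}{2}\rceil$), whereas you take the size condition as the \emph{definition} of $j'$ and try to prove $j'\leq s$ by contradiction. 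That contradiction does not close for odd $k'$: you get $\dim(\U\cap\R)\leq k'-\lceil\frac{k'-1}{2}\rceil=\frac{k'+1}{2}$ against $\dim(\U\cap\R)\geq\frac{k'+1}{2}$, which is perfectly consistent. Your parenthetical has the parities reversed: even $k'$ is the case where the argument succeeds (there $\lceil\frac{k'+1}{2}\rceil=\frac{k'}{2}+1>\frac{k'}{2}=k'-\lceil\frac{k'-1}{2}\rceil$); odd $k'$ is the case that breaks, and it cannot be patched.

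It cannot be patched because under your reading the claim $j'\leq s$ is simply false. Take $k'=3$, $l=2$, $s=0$, and let $\R$ be spanned by two independent vectors of $\U$ together with one error vector supported only on block $1$. Then $d_S(\U,\R)=k-1$, $B_0$ has two elements, $B_1$ has one element, and $|B_1|=1=\lceil\frac{k'-1}{2}\rceil$, yet the $2$-dimensional space $\U\cap\R$ is certainly not contained in the $1$-dimensional space $\langle B_1\rangle$. So the lemma must be read with $j'$ being the leading block of the sent word, and the part you still owe --- which your proposal never addresses --- is that this block satisfies $|B_{s}|\geq\lceil\frac{k'-1}{2}\rceil$. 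That follows from the count $\dim(\U\cap\R)\leq|B_s|$ (trivial intersection of $\U\cap\R$ with $\langle\bigcup_{j>s}B_j\rangle$, an argument you essentially already wrote down for the wrong purpose) together with the decodability bound.
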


\begin{proof}
  From Theorem \ref{t:unique_ident}, let $\gamma=(\gamma_0,\dots,
  \gamma_{l-1})$ be the unique identifier of the sent space $\U$ and
  $j':=\min\{j\in 0,\dots,l-1 \mid \gamma_j\neq 0\}$. It follows that
  the first $j'-1$ $k$-tuples of coordinates of each element of
  $\U\cap \R$ are zeros. Combining this with the properties of a reduced row echelon form, it follows that $\U\cap \R$ intersects
  trivially with $\langle\bigcup_{j=0}^{j'-1}B_j\rangle$. Then
  $\U\cap \R \subseteq\langle \bigcup_{j=j'}^{l-1}B_j\rangle$. 
Since $\dim(\U\cap \R)\geq \lceil \frac{k'-1}2\rceil$ it follows that 
$|B_{j'}|\geq\lceil \frac{k'-1}2\rceil$. 
\end{proof}

A consequence of the previous lemma is that the elements of the basis
contained in $\langle\bigcup_{j=0}^{j'-1}B_j\rangle$ are actually
erroneous. Thus, the algorithm can be altered such that it only works with the basis vectors inside $\bigcup_{j=j'}^{l-1}B_j$, which again improves the performance of the algorithm.

%one does not need to compute the respective $\gamma$'s of these elements but only of linear combinations containing them.

%__________________________________________________________________________________________________________________________________________________________________________________________________________________________________________________________________________

\section{Performance of the Algorithm}\label{4}
\subsection{Complexity}\label{4.1}

For a better understanding of the complexity of the algorithm we first
consider binary spread codes and then generalize it. Note, that the algorithm works for received spaces of arbitrary dimension.

If $k'$ is the dimension of the received space and $f$ is the dimension of the error space $\mathcal{E}$, the algorithm computes the
sums of at most $f+1$ basis vectors, which are
$\binom{k'}{f+1}$ many. For each sum it proceeds with an inversion
and at most $\frac{n}{k}-1$ multiplications over $\F_{2^{k}}$. The
complexity of inverting is upper-bounded by $\mathcal{O} (k^{2})$ over
$\F_2$ and the one of multiplying by $\mathcal{O} (k \log k)$ over
$\F_2$ using the FFT \cite[Chapter 8.2]{ga03}. Using the
approximation $\binom{k'}{f+1}\approx \frac{k'^{f+1}}{(f+1)!}$, the overall 
complexity is upper-bounded by $\mathcal{O} (nkk'^{f+1})$ over $\F_{2}$.

Over $\F_q$ one needs to consider not only sums but $\F_q$-linear
combinations. Thus we get an upper bound of $\binom{qk'}{f+1}$
combinations to check.  Hence, the  overall complexity is
upper-bounded by $\mathcal{O} (nk(qk')^{f+1})$ over $\F_{q}$.

The complexity reduces when some of the generators of the sent
codeword are not influenced by the errors since in this case the algorithm has to check only linear combinations of a smaller amount of basis vectors of the
received space.

In the following we compare this complexity with the one of the spread decoding algorithm shown in
\cite{go11} and the decoding algorithms for Reed-Solomon like codes
contained in \cite{ko08} and \cite{si08a} in the case of $q=2$ and $k=k'$.  
In \cite{go11} the authors
present a minimum distance decoder for the their spread code
construction. The complexity of their algorithm is
$\mathcal{O}((n-k)k^3)$. If the dimension of the error space is
minimal the two algorithms perform the same. 
When applied to
spread codes the complexities of the algorithms presented in \cite{ko08} and
\cite{si08a} are $\mathcal{O}(n^2(n-k)^2)$ and $\mathcal{O}(k(n-k)^3)$, respectively. The algorithm proposed in this work performs better if the
dimension of the codewords, of the received space and of the error
space are small.

%%%%%%%%%%%%%%%%%%%%%%%%%%%%%%%%%%%%%%%%%%%%%%%%%%%%%%%%%%%%%%%%%%%%%%%%%%%%%%%%%%%%%%%%

\subsection{Probability of Better Performance}\label{4.2}

The aforementioned complexity considers the worst case scenario of a decoding procedure. We will now investigate the expected amount of computations needed for the algorithm under the assumption that the channel transfer matrices are uniformly distributed. 
Usually, when we sent a codeword $\U$ of dimension $k$, received a codeword $\R$ of dimension $k'$ and $e$ many insertions were made during transmission, we model the transmission by
$$\R = \bar{\U} \oplus \E$$
where $\bar{\U}$ is a subspace of $\U$ and $\dim \E = e$. Since we use a matrix channel, the actual sent and received matrices are
$$R = A\left[ \begin{array}{c}\bar{U} \\ E\end{array} \right]$$
where $\rs (R) = \R, \rs (\bar{U})=\bar{\U}, \rs (E) =\E$ and $A\in \GL_{k'}$ is the channel transfer matrix representing the random linear combinations done throughout the whole network.
%But for studying the probability of the random linear combinations done throughout transmission we can just as well assume that a basis of $\R$ was sent from the sources and the transmission was error-free, i.e. 

%For simplicity we will now replace $k'$ with $k$ and compute the probabilities of how many of the received basis vectors are erroneous and how many are not. We will first do this for the case that only one insertion occurred during transmission.

%{\color{red}
For simplicity  we assume that $\dim \E = 1$ and compute in the following the probabilities that the algorithm terminates after the first round, i.e. after only considering the received vectors and no linear combinations of them. 
%}

\begin{lem}
The set of elements of $\GL_{k}$  whose last column has $z$ many zero entries has cardinality
\[(q-1)^{k-z} \binom{k}{z} \prod_{i=1}^{k-1}(q^{k}-q^{i}) .\]
\end{lem}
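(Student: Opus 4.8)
The plan is to count invertible $k \times k$ matrices over $\F_q$ by building them column by column, starting with the last column subject to the zero-pattern constraint, and then filling in the remaining $k-1$ columns freely subject only to the linear independence needed for invertibility.

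First I would fix which $z$ positions of the last column are zero: there are $\binom{k}{z}$ choices of the zero-set. Having fixed the zero-set, the last column has $k-z$ free entries, each of which must be nonzero (so that the column pattern has exactly $z$ zeros, not more), giving $(q-1)^{k-z}$ possibilities; in particular every such column is a nonzero vector, hence spans a $1$-dimensional subspace. Next I would adjoin the remaining columns one at a time, from column $k-1$ down to column $1$: when $j$ columns have already been chosen and are linearly independent, a new column may be any vector of $\F_q^k$ outside their span, so there are $q^k - q^j$ choices. Running $j$ from $1$ to $k-1$ gives the factor $\prod_{i=1}^{k-1}(q^k - q^i)$, which is exactly the number of ways to extend a fixed nonzero vector to an ordered basis of $\F_q^k$. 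Multiplying the three independent counts yields
\[
(q-1)^{k-z}\binom{k}{z}\prod_{i=1}^{k-1}(q^{k}-q^{i}),
\]
as claimed, and the resulting matrix is invertible precisely because its columns form a basis.

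The only point needing a little care — and the closest thing to an obstacle — is the bookkeeping that the number of completions of the last column to an ordered basis does not depend on \emph{which} nonzero vector that column is: this is just the transitivity of $\GL_k$ on nonzero vectors, so the count $\prod_{i=1}^{k-1}(q^k-q^i)$ is the same for each of the $(q-1)^{k-z}\binom{k}{z}$ admissible last columns, and the three factors genuinely multiply. One should also note that the condition ``exactly $z$ zeros'' is what forces the nonzero entries to contribute $(q-1)$ rather than $q$ each; if the intended reading were ``at least $z$ zeros'' the first factor would change, but the stated formula corresponds to exactly $z$. A sanity check is $\sum_{z=0}^{k}(q-1)^{k-z}\binom{k}{z} = q^k$, so summing over $z$ recovers the total $\prod_{i=0}^{k-1}(q^k-q^i) = |\GL_k|$, confirming the count is consistent with the well-known order of the general linear group.
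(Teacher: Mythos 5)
Your proof is correct and follows essentially the same route as the paper's: count the admissible last columns as $(q-1)^{k-z}\binom{k}{z}$ and then extend that fixed nonzero column to an ordered basis in $\prod_{i=1}^{k-1}(q^{k}-q^{i})$ ways. The extra remarks on transitivity and the sanity check $\sum_{z}(q-1)^{k-z}\binom{k}{z}=q^{k}$ are welcome but do not change the argument.
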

\begin{proof}
First we compute how many $v\in \F_{q}^{n}$ with exactly $z$ zeros exist. Fix the first $z$ positions to be zero, then there are $k-z$ positions to be filled with non-zero elements. Thus, there are $(q-1)^{k-z}$ possible vectors. Moreover, we have $\binom{k}{z} $ many possibilities to choose different zero positions, hence there are $(q-1)^{k-z} \binom{k}{z} $ many $v\in \F_{q}^{n}$ with exactly $z$ zeros. We fix one of these vectors as the last column of our $\GL_{k}$ element, then the next column can be chosen from $q^{k}-q$ elements etc. 
\end{proof}

\begin{thm}
The probability that $z$ many of the received basis vectors $r_{1},\dots, r_{k'}$ are not influenced by the error is 
\[\frac{(q-1)^{k'-z}}{q^{k'}-1}  \binom{k'}{z} .\]
\end{thm}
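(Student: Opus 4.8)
The plan is to combine the two structural facts already set up in this section: first, the model $R = A\left[\begin{smallmatrix}\bar U\\ E\end{smallmatrix}\right]$ in which $A\in\GL_{k'}$ is the uniformly distributed channel transfer matrix, and second, the preceding lemma counting the elements of $\GL_{k'}$ whose last column has a prescribed number of zeros. The key observation is that a received basis vector $r_i$ is ``not influenced by the error'' precisely when, in writing $r_i = \sum_j a_{ij}\bar u_j + a_{i,k'}\, e$ (the rows of $\bar U$ together with $e$ forming a basis of $\R$, with $e$ placed last), the coefficient $a_{i,k'}$ of the error vector vanishes; that is, exactly when the $i$-th entry of the last column of $A$ is zero. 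So the number of error-free basis vectors among $r_1,\dots,r_{k'}$ is exactly the number of zero entries in the last column of $A$.

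First I would make this reduction explicit: condition on the sent space $\U$, the received subspace $\bar\U\subseteq\U$, and the one-dimensional error space $\E$, fixing matrices $\bar U$ and $e$ so that $\left[\begin{smallmatrix}\bar U\\ e\end{smallmatrix}\right]$ has full rank $k'$ (this uses $\dim\E = 1$). Since $A$ ranges uniformly over $\GL_{k'}$, which has cardinality $\prod_{i=0}^{k'-1}(q^{k'}-q^{i})$, the event ``exactly $z$ of the $r_i$ are error-free'' corresponds to the event ``the last column of $A$ has exactly $z$ zeros,'' whose probability is, by the previous lemma,
\[
\frac{(q-1)^{k'-z}\binom{k'}{z}\prod_{i=1}^{k'-1}(q^{k'}-q^{i})}{\prod_{i=0}^{k'-1}(q^{k'}-q^{i})}
= \frac{(q-1)^{k'-z}\binom{k'}{z}}{q^{k'}-1},
\]
since all factors $\prod_{i=1}^{k'-1}(q^{k'}-q^{i})$ cancel and only the $i=0$ term $q^{k'}-1$ survives in the denominator. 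Because this conditional probability does not depend on the particular $\U,\bar\U,\E$ chosen, it is also the unconditional probability, which is the claimed formula.

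The main obstacle is justifying that ``error-free basis vector'' is really equivalent to ``zero in the last column of $A$,'' i.e. that the decoding algorithm's notion of an uninfluenced basis vector coincides with $a_{i,k'}=0$ rather than with some coarser condition like $r_i\in\U$. One must note that the chosen basis of $\R$ extends a basis $\bar u_1,\dots,\bar u_{k'-1}$ of $\bar\U$ (all of which lie in $\U$) by the single vector $e$, so that $r_i\in\U$ iff $r_i\in\bar\U$ iff $a_{i,k'}=0$; and conversely if $a_{i,k'}\neq 0$ then $r_i\notin\U$ since $e\notin\U$ (the error space meets $\U$ trivially in a genuine insertion). A minor secondary point is handling the degenerate cases $z=0$ and $z=k'$: the former is automatic, while $z=k'$ would force the last column of $A$ to be zero, contradicting invertibility, and indeed the formula gives a nonzero value only when $z<k'$ unless $q=2$ — this is consistent since over $\F_q$ with $q>2$ a zero column is impossible, whereas for $q=2$ the last column having all $k'$ entries equal to the nonzero scalar is the only all-nonzero option. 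Once the equivalence is pinned down, the rest is the bookkeeping cancellation displayed above.
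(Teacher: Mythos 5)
Your proof is correct and takes essentially the same route as the paper's: model the channel as left-multiplication of $\left[\begin{smallmatrix}\bar U\\ E\end{smallmatrix}\right]$ by a uniformly random $A\in\GL_{k'}$, observe that $r_i$ is error-free exactly when the $i$-th entry of the last column of $A$ vanishes, and divide the count from the preceding lemma by $|\GL_{k'}|$. The only quibble is your closing aside about $z=k'$: the formula there evaluates to $1/(q^{k'}-1)\neq 0$ even though an invertible matrix cannot have a zero last column, but this does not affect the argument since the paper (and you) only ever use the formula for $z\leq k'-1$.
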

\begin{proof}
If one error and one erasure occurred during transmission we can model this as an error-free transmission of the matrix where the last row is the error vector and the other rows are the basis vectors of the code word without the erasure, and the channel action is represented by $\GL_{k'}$-multiplication on the left. Then a received vector is a linear combination of elements including the error if and only if the respective position in the last column of the $\GL_{k'}$-element is non-zero. Thus, we divide the number of elements with $z$ many zeros by $|\GL_{k'}|$.
\end{proof}

%\begin{thm}
%The expected number of received vectors that are ``correct'' is
%\[ \sum_{z=0}^{k-1} z \frac{(q-1)^{k-z}}{q^{k}-1}  \binom{k}{z} =  \frac{(q-1)^{k}}{q^{k}-1} \sum_{z=0}^{k-1} \frac{z}{(q-1)^{z}}  \binom{k}{z}  .\]
%\end{thm}
%

\begin{thm}\label{thm10}
 %For $q=2$ the expected number of received vectors that are ``correct'' is
%\[\frac{k(2^{k-1}-1)}{2^k-1}   .\]
The expected number of received basis vectors that are error-free is
\[ \frac{k'(q^{k'-1}-1)}{q^{k'}-1} \approx \frac{k'}{q}  .\]
\end{thm}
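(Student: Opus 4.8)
The plan is to compute the expectation directly from the distribution established in the previous theorem. That result tells us that the probability of exactly $z$ error-free basis vectors among $r_1,\dots,r_{k'}$ is $p_z = \frac{(q-1)^{k'-z}}{q^{k'}-1}\binom{k'}{z}$, so the expected number is $\sum_{z=0}^{k'} z\, p_z = \frac{1}{q^{k'}-1}\sum_{z=0}^{k'} z \binom{k'}{z} (q-1)^{k'-z}$. First I would recognize the sum $\sum_{z=0}^{k'} \binom{k'}{z}(q-1)^{k'-z} x^{z}$ as the binomial expansion of $(q-1+x)^{k'}$, so that differentiating with respect to $x$ and evaluating at $x=1$ yields $\sum_z z\binom{k'}{z}(q-1)^{k'-z} = k'\, q^{k'-1}$.

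Substituting this back gives the expected value $\frac{k'\, q^{k'-1}}{q^{k'}-1}$. I should double-check consistency with the theorem statement: note that $\sum_z p_z = \frac{1}{q^{k'}-1}\sum_z \binom{k'}{z}(q-1)^{k'-z} = \frac{q^{k'}}{q^{k'}-1} \neq 1$. This indicates that the distribution in the previous theorem is conditioned on the received vectors being linearly independent, or equivalently that the $z=k'$ case (all vectors error-free, i.e. the error column is entirely zero, which cannot happen for an invertible matrix) is excluded. Thus the relevant normalization actually comes out correctly once one restricts to $z < k'$: the numerator becomes $\sum_{z=0}^{k'-1} z\binom{k'}{z}(q-1)^{k'-z} = k' q^{k'-1} - k'$, giving $\frac{k'(q^{k'-1}-1)}{q^{k'}-1}$, which matches the claimed formula exactly. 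So the careful step is to exclude the degenerate term $z=k'$ from both the normalization and the numerator.

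Finally, I would establish the asymptotic approximation: $\frac{k'(q^{k'-1}-1)}{q^{k'}-1} = \frac{k'}{q}\cdot\frac{q^{k'}-q}{q^{k'}-1} = \frac{k'}{q}\left(1 - \frac{q-1}{q^{k'}-1}\right) \approx \frac{k'}{q}$ for $k'$ not too small. The main obstacle here is not any deep calculation but rather being careful about the support of the distribution — making sure the $z=k'$ term is correctly omitted so that the combinatorial identity produces the stated numerator $q^{k'-1}-1$ rather than $q^{k'-1}$; everything else is a routine application of the binomial theorem and its derivative.
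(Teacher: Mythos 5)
Your proposal is correct and follows essentially the same route as the paper: the expectation is computed as $\sum_z z\,p_z$ over $z=0,\dots,k'-1$ and evaluated via the derivative of the binomial expansion, with the $z=k'$ term excluded because the last column of an invertible transfer matrix cannot vanish (the paper phrases this as ``at least one of the received vectors has to be erroneous''). Your normalization check $\sum_{z=0}^{k'-1}p_z=1$ and the explicit derivation of the identity $\sum_{z=0}^{k'-1} z\binom{k'}{z}(q-1)^{k'-z}=k'(q^{k'-1}-1)$ simply make explicit the steps the paper states without proof.
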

\begin{proof}
Since one error occurred, at least one of the received vectors has to be erroneous. Then the expected value is
 \begin{align*}
  & \sum_{z=0}^{k'-1}z \frac{(q-1)^{k'-z}}{q^{k'}-1}  \binom{k'}{z} \\
=& \frac{(q-1)^{k'}}{q^{k'}-1} \sum_{z=0}^{k'-1} \frac{z}{(q-1)^{z}}  \binom{k'}{z} \\
=& \frac{(q-1)^{k'}}{q^{k'}-1} \frac{k'(q^{k'-1}-1)}{(q-1)^{k'}}\\
=&\frac{k'(q^{k'-1}-1)}{q^{k'}-1} .
 \end{align*}

\end{proof}

\begin{figure}[h]
\begin{center}
$
\begin{array}{|c|c|c|c|c|c||c|}
\hline  q\backslash k & 2 & 3& 4 &5 &6 & \approx\\
\hline  2 & \frac{2}{3} & \frac{9}{7}& \frac{28}{15} & \frac{75}{31} & \frac{62}{21} &\frac{k}{2}\\
\hline  3 & \frac{1}{2} & \frac{12}{13}& \frac{13}{10} & \frac{200}{121} & \frac{363}{182} &\frac{k}{3}\\
\hline  4 & \frac{2}{5} & \frac{5}{7}& \frac{84}{85} &\frac{425}{341} &\frac{682}{455} &\frac{k}{4}\\
\hline  5 & \frac{1}{3} & \frac{18}{31}& \frac{31}{39} &\frac{780}{781} &\frac{781}{651} &\frac{k}{5}\\\hline
\end{array}
$
\caption{Values for Theorem \ref{thm10}.}
\end{center}
\end{figure}

%{\color{red}

Because one needs more than $k'/2$ error-free basis vectors to decode correctly it follows that:
%\emph{IS IT NOT ENOUGH $k'/2$? IF YES, ALSO THE PROOF CHANGES.}

\begin{cor}\label{cor12}
If $\dim(\E)=1$ and the received space $\R$ is decodable, then the probability that the decoding algorithm terminates after the first round (i.e. after $nkk'$ operations over $\F_{q}$) is
\[1 - \frac{(q-1)^{k'} (q^{\lceil \frac{k'+1}{2}\rceil}-1)}{(q-1)^{\lceil \frac{k'+1}{2}\rceil}( q^{k'}-1)} \approx 1-\left(\frac{q-1}{q}\right)^{\lfloor\frac{k'-1}{2}\rfloor}.\]
\end{cor}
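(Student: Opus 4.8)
The plan is to translate ``the algorithm stops after the first round'' into a statement about how many of the received basis vectors $r_1,\dots,r_{k'}$ are error-free, and then to read off the probability from the distribution of that number established above.

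\emph{Step 1: characterising a first-round success.} Put $m:=\lceil\frac{k'+1}{2}\rceil$. I would show that the algorithm halts after processing only $r_1,\dots,r_{k'}$ precisely when at least $m$ of these vectors lie in the sent codeword $\U$. If that happens, then by Theorem~\ref{t:unique_ident} those $\ge m$ error-free basis vectors are linearly independent and all carry the single identifier $\gamma_{\U}$, so the algorithm returns $\phi^{-1}(\F_{q^k}\cdot\sum_j(\gamma_{\U})_j\beta^j)=\U$ already in the first round. Conversely, since $\dim\E=1$ one has $\U\cap\R=\bar{\U}$ of dimension $k'-1$, so every basis vector outside $\U$ has an identifier $\neq\gamma_{\U}$; thus among $r_1,\dots,r_{k'}$ only the error-free ones can make up the $\gamma_{\max}$-class, and a class of size $\ge m>k'/2$ is automatically the unique majority. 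Hence the first round succeeds iff that class already has $\ge m$ members. The decodability hypothesis is exactly what guarantees $\dim(\U\cap\R)=k'-1\ge m$, so that such a success is possible and returns the correct codeword.

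\emph{Step 2: the probability.} The theorem stating that exactly $z$ of $r_1,\dots,r_{k'}$ are error-free with probability $\frac{(q-1)^{k'-z}}{q^{k'}-1}\binom{k'}{z}$ (for $z=0,\dots,k'-1$, these summing to $1$ because $\sum_{z=0}^{k'}\binom{k'}{z}(q-1)^{k'-z}=q^{k'}$) now gives, summing over the ``bad'' range $z=0,\dots,m-1$,
\[
1-P(\text{halt after round }1)=\frac{1}{q^{k'}-1}\sum_{z=0}^{m-1}\binom{k'}{z}(q-1)^{k'-z},
\]
and it remains to bring this partial sum into the stated shape: factor out $(q-1)^{k'-m+1}$, handle the remaining truncated binomial/geometric series, and simplify with $\frac{q^{m}-1}{q-1}=1+q+\dots+q^{m-1}$. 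The stated approximation then follows by discarding the $-1$'s, writing $(q-1)^{k'}/(q-1)^{m}=(q-1)^{k'-m}$, and using the elementary identity $k'-m=\lfloor\frac{k'-1}{2}\rfloor$. The cost figure ``$nkk'$ operations over $\F_q$'' is just the first-round cost recorded in Section~\ref{4.1}: one inversion and at most $\frac{n}{k}-1$ multiplications over $\F_{q^{k}}$ per vector, i.e.\ $O(nk)$ operations over $\F_q$, for each of the $k'$ basis vectors.

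\emph{Main obstacle.} Step~1 is the delicate conceptual point, namely identifying $\gamma_{\max}$ with the unique codeword identifier and excluding a spurious majority class; it only reassembles facts already prepared in the ``Improvements on the Algorithm'' subsection. The genuinely computational part is evaluating $\sum_{z=0}^{m-1}\binom{k'}{z}(q-1)^{k'-z}$ in closed form, where some care with the ceilings and floors is needed so that the exponent emerges as $\lfloor\frac{k'-1}{2}\rfloor$ and the factors line up as claimed.
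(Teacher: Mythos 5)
Your overall route is the same as the paper's: identify ``the algorithm halts after round one'' with ``at least $m=\lceil\frac{k'+1}{2}\rceil$ of the received basis vectors are error-free'' and then sum the tail of the distribution established in the preceding theorem. (The paper makes the first identification silently and sums $\sum_{z=m}^{k'-1}\frac{(q-1)^{k'-z}}{q^{k'}-1}\binom{k'}{z}$ directly rather than via the complement; that difference is immaterial, and your Step~1 is if anything more careful than the paper's.)

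The genuine gap is exactly the step you defer in Step~2. To reach the stated closed form you would need
\[
\sum_{z=0}^{m-1}\binom{k'}{z}(q-1)^{-z}\;=\;\frac{q^{m}-1}{(q-1)^{m}},
\]
which is precisely the identity the paper itself invokes when it rewrites the tail sum as $\frac{q^{k'}-1}{(q-1)^{k'}}-\frac{q^{m}-1}{(q-1)^{m}}$. But $\frac{q^{m}-1}{(q-1)^{m}}=\sum_{z=0}^{m-1}\binom{m}{z}(q-1)^{-z}$, so the identity would force $\binom{k'}{z}=\binom{m}{z}$ for all $z<m$, which fails whenever $m<k'$. Concretely, for $q=2$, $k'=3$, $m=2$ the truncated sum is $1+3=4$ while $\frac{q^{m}-1}{(q-1)^{m}}=3$; the probability of halting computed from the distribution is then $3/7$, not the $4/7$ that the displayed formula (and the paper's table) gives. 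A truncated binomial sum of this kind has no closed form of the claimed shape, so ``bring this partial sum into the stated shape'' is not a routine manipulation of geometric series --- it is the point at which your argument, and as far as I can tell the paper's own, breaks down. A safe repair is to leave the answer as $1-\frac{1}{q^{k'}-1}\sum_{z=0}^{m-1}\binom{k'}{z}(q-1)^{k'-z}$ (or to bound it), rather than to assert the product form.
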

\begin{proof}
Let $l:=\lceil \frac{k'+1}{2}\rceil$. Then the probability that at least $l$ many of $r_{1},\dots, r_{k'}$ are error-free is
\begin{align*}
& \sum_{z=l}^{k'-1} \frac{(q-1)^{k'-z}}{q^{k'}-1}  \binom{k'}{z} \\
=&\frac{(q-1)^{k'}}{q^{k'}-1} \sum_{z=l}^{k'-1} \frac{1}{(q-1)^{z}}  \binom{k'}{z} \\
=&\frac{(q-1)^{k'}}{q^{k'}-1} \left(\frac{q^{k'}-1}{(q-1)^{k'}} - \frac{q^{l}-1}{(q-1)^{l}} \right) \\
=&1 - \frac{(q-1)^{k'} (q^{l}-1)}{(q-1)^{l}( q^{k'}-1)} 
\end{align*}
\end{proof}

\begin{figure}[h]
\begin{center}
$
\begin{array}{|c|c|c|c|c|c|c|c|}
\hline  q\backslash k' & 2 &  3 &4& 5 & 6 & 7 \\
\hline  2 & 0 & \frac{4}{7} & \frac{8}{15}  & \frac{24}{31} & \frac{16}{21} & \frac{112}{127}\\
\hline  3 & 0 & \frac{5}{13} & \frac{7}{20} & \frac{69}{121} & \frac{51}{91} & \frac{773}{1093}\\
\hline  4 & 0 & \frac{2}{7} & \frac{22}{85}& \frac{152}{341} & \frac{40}{91} & \frac{3166}{5461}\\
\hline
\end{array}
$
\caption{Values for Corollary \ref{cor12}.}
\end{center}
\end{figure}

% Thus, one can see that if $\R$ is decodable, i.e. if $k'\geq \lceil \frac{k+1}{2}\rceil$, then with probability greater than $0.5$ the algorithm terminates after the first round.

Thus, one can see that if $k'\geq 2q-1$, then with probability greater than $0.5$ the algorithm terminates after the first round.

In a similar manner one can compute the same probability under the assumption that $\dim \E \geq 2$. Moreover, one can determine the probability that the algorithm terminates after the second round, the third round etc. 
%}

\section{Conclusions}\label{5}

In this work we consider a certain construction for spread codes, which are
codes with optimal cardinality and error correction capability. The
construction is based on the representation of code words via a
unique element $\gamma\in\F_{q^k}^{n/k}$. We present a minimum distance
decoding algorithm which works by finding this unique $\gamma$. The
performance of the algorithm is mainly based on two properties: the
operations are done over $\F_{q^k}$ instead of $\F_{q^n}$ and the
elements tested for finding the $\gamma$ are only the linear combinations of $f$ of the basis vectors of the received space instead of
all elements of the vector space. As a result we obtain a decoding algorithm
with a good performance when the dimension of the codewords, of the received
space and of the error space are small.

%__________________________________________________________________________________________________________________________________________________________________________________________________________________________________________________________________________

\section*{Acknowledgement}

The authors thank Wolfgang Willems for the useful discussion during his visit and Joachim Rosenthal for the comments on the final version of this work.

%__________________________________________________________________________________________________________________________________________________________________________________________________________________________________________________________________________

\bibliography{/home/b/trautman/Dropbox/my_bib/network_coding_stuff}

\providecommand{\bysame}{\leavevmode\hbox to3em{\hrulefill}\thinspace}
\providecommand{\MR}{\relax\ifhmode\unskip\space\fi MR }
% \MRhref is called by the amsart/book/proc definition of \MR.
\providecommand{\MRhref}[2]{%
  \href{http://www.ams.org/mathscinet-getitem?mr=#1}{#2}
}
\providecommand{\href}[2]{#2}
\begin{thebibliography}{10}

\bibitem{ba11a}
L.~Bader and G.~Lunardon, \emph{Desarguesian spreads}, Ric. Mat. \textbf{60}
  (2011), no.~1, 15--37. \MR{2803932 (2012d:51007)}

\bibitem{et08u}
T.~Etzion and N.~Silberstein, \emph{Error-correcting codes in projective spaces
  via rank-metric codes and {F}errers diagrams}, IEEE Transactions on
  Information Theory \textbf{55} (2009), no.~7, 2909--2919. \MR{MR2589964
  (2010h:94254)}

\bibitem{ga03}
J.~{von zur } Gathen and J.~Gerhard, \emph{Modern computer algebra}, second
  ed., Cambridge University Press, Cambridge, 2003. \MR{MR2001757
  (2004g:68202)}

\bibitem{go11}
E.~Gorla, F.~Manganiello, and J.~Rosenthal, \emph{An algebraic approach for
  decoding spread codes}, arXiv:1107.55230v1 \textbf{[cs.IT]} (2011).

\bibitem{ko03a}
R.~Koetter and M.~Medard, \emph{An algebraic approach to network coding},
  Networking, IEEE/ACM Transactions on \textbf{11} (2003), no.~5, 782 -- 795.

\bibitem{ko08p}
A.~Kohnert and S.~Kurz, \emph{Construction of large constant dimension codes
  with a prescribed minimum distance}, MMICS (Jacques Calmet, Willi Geiselmann,
  and J\"orn M\"uller-Quade, eds.), Lecture Notes in Computer Science, vol.
  5393, Springer, 2008, pp.~31--42.

\bibitem{ko08}
R.~K\"otter and F.R. Kschischang, \emph{Coding for errors and erasures in
  random network coding}, IEEE Transactions on Information Theory \textbf{54}
  (2008), no.~8, 3579--3591.

\bibitem{la01t}
M.~Lavrauw, \emph{Scattered spaces with respect to spreadsand eggs in finite
  projective spaces}, Ph.D. thesis, Eindhoven University of Technology,
  Eindhoven, 2001.

\bibitem{li03}
S.-Y.R. Li, R.W. Yeung, and N.~Cai, \emph{Linear network coding}, Information
  Theory, IEEE Transactions on \textbf{49} (2003), no.~2, 371--381.

\bibitem{li94}
R.~Lidl and H.~Niederreiter, \emph{Introduction to finite fields and their
  applications}, Cambridge University Press, Cambridge, London, 1994, Revised
  edition.

\bibitem{ma08p}
F.~Manganiello, E.~Gorla, and J.~Rosenthal, \emph{Spread codes and spread
  decoding in network coding}, Proceedings of the 2008 IEEE International
  Symposium on Information Theory (Toronto, Canada), 2008, pp.~851--855.

\bibitem{si08a}
D.~Silva, F.R. Kschischang, and R.~K\"otter, \emph{A rank-metric approach to
  error control in random network coding}, Proceedings of the 2008 IEEE
  International Symposium on Information Theory \textbf{54} (2008), no.~9,
  3951--3967.

\bibitem{tr10p}
A.-L. Trautmann, F.~Manganiello, and J.~Rosenthal, \emph{Orbit codes - a new
  concept in the area of network coding}, Information Theory Workshop (ITW),
  2010 IEEE (Dublin, Ireland), August 2010, pp.~1 --4.

\end{thebibliography}
\bibliographystyle{amsplain}

\end{document}